\documentclass[conference]{IEEEtran}%

\usepackage{amsmath,amsfonts}
\usepackage[T1]{fontenc}
\usepackage{algorithmic}
\usepackage{setspace}
\usepackage{algorithm}
\usepackage{array}
\usepackage{textcomp}
\usepackage{stfloats}
\usepackage{verbatim}
\usepackage{graphicx}
\usepackage{cite}
\usepackage[bottom]{footmisc}
\usepackage{balance}
\usepackage{amsthm}
\usepackage{mathtools}
\usepackage[utf8]{inputenc}
\usepackage{float}
\usepackage{ragged2e}
\usepackage[table,xcdraw]{xcolor}
\usepackage{multirow}
\usepackage{amssymb}

\newtheorem{lemma}{Lemma}
\newtheorem{remark}{Remark}
\newtheorem{corollary}{Corollary}
\newcommand\blfootnote[1]{%
  \begingroup
  \renewcommand\thefootnote{}\footnote{#1}%
  \addtocounter{footnote}{-1}%
  \endgroup
}

\ifCLASSINFOpdf
\else
\fi

\begin{document}

\title{SLIPT-Enabled Ground-to-UAV FSO Systems with Optical Reconfigurable Intelligent Surfaces}
\author{
\IEEEauthorblockN{Luna Sugiyama, Phuc V. Trinh, and Shinya~Sugiura$^*$}
\IEEEauthorblockA{Institute of Industrial Science, The University of Tokyo, Tokyo, Japan\\
E-mail: \{lunagracesugiyama@g.ecc.u-tokyo.ac.jp, trinh@iis.u-tokyo.ac.jp, sugiura@iis.u-tokyo.ac.jp\} 
}
}

\markboth{}
{Shell \MakeLowercase{\textit{et al.}}: Bare Demo of IEEEtran.cls for Journals}
\maketitle
\begin{abstract}
This paper proposes an optical reconfigurable intelligent surface (ORIS)-assisted ground-to-unmanned aerial vehicle (UAV) free-space optical (FSO) communication system empowered by simultaneous lightwave information and power transfer (SLIPT). To overcome the line-of-sight (LoS) limitation of FSO-based SLIPT systems, we introduce an ORIS that reflects the laser beam towards a non-LoS UAV receiver. We model and analyze the combined channel characteristics, incorporating atmospheric loss, turbulence-induced fading, pointing error, and angle-of-arrival fluctuations due to UAV hovering. We derive closed-form expressions for harvested energy, outage probability, and symbol error rate (SER). Numerical results show that integrating ORIS improves EH efficiency while maintaining manageable outage and SER performance. 
\end{abstract}
\begin{IEEEkeywords}
Free-space optical (FSO) communication, simultaneous lightwave information and power transfer (SLIPT), optical reconfigurable intelligent surface (ORIS).
\end{IEEEkeywords}

\IEEEpeerreviewmaketitle
\section{Introduction}
\blfootnote{Preprint for publication in \textit{IEEE International Conference on Communications (ICC)}, Glasgow, Scotland, UK, May 2026, DOI: 10.1109/ICC59461.2026.11586808. $\copyright$ 2026 IEEE. Personal use of this material is permitted. Permission from IEEE must be obtained for all other uses, in any current or future media, including reprinting/republishing this material for advertising or promotional purposes, creating new collective works, for resale or redistribution to servers or lists, or reuse of any copyrighted component of this work in other works.}

The evolution from fifth generation to sixth generation (6G) wireless networks' capabilities, such as global coverage, ultra-high data rate transmission, ultra-low latency, low power consumption, and high energy efficiency~\cite{RefWorks:RefID:33-2023road}. Although conventional radio-frequency (RF) technology has been the cornerstone of wireless communications, it is increasingly constrained by spectrum scarcity and significant energy consumption. In this context, 
Optical wireless communication encompassing laser-based free space optical (FSO) communications and light emitting diode (LED)-based visible light communications (VLC), has emerged as a compelling alternative to the conventional radio-frequency (RF) technology~\cite{RefWorks:RefID:34-2024recent}.
Moreover, non-terrestrial networks in sixth generation (6G) introduce new applications, such as unmanned aerial vehicles (UAVs) and satellite communications, where conventional RF communications may be limited or infeasible.
Concurrently, the battery-limited Internet of Things (IoT) devices necessitate innovative solutions for sustained operation, where energy harvesting (EH) is paramount.
This dual need for high-speed data and continuous power has catalyzed interest in simultaneous lightwave information and power transfer (SLIPT), a paradigm where a single optical signal is leveraged for both communications and energy delivery~\cite{RefWorks:RefID:3-2018simultaneous}. 

The core principle of SLIPT is to utilize a light source to concurrently transmit information and deliver wireless power to a receiving device, thereby prolonging the operational lifetime of energy-constrained nodes~\cite{RefWorks:RefID:35-2024sustainabilitydriven}.
This technology is particularly transformative for mobile applications, such as drones acting as aerial base stations, without the need to land, significantly enhancing their persistence and utility~\cite{RefWorks:RefID:22-2024simultaneous}.
At the receiver (Rx), the incoming optical signal has to be processed for two distinct purposes: information decoding (ID) and EH.
Several architectures and signal processing techniques have been proposed, including time switching (TS) and power splitting (PS) schemes~\cite{RefWorks:RefID:6-2024slipt}.
While LED-based VLC is suitable for indoor SLIPT applications~\cite{RefWorks:RefID:36-2024slipt}, 
laser-based FSO systems are far more effective for long-range, outdoor scenarios~\cite{RefWorks:RefID:37-2023energy}.

Despite its advantages, the primary limitation of laser-based FSO is its strict requirement for a clear line-of-sight (LoS) path between the transmitter (Tx) and Rx.
For applications such as drone-based networks operating at low altitudes, maintaining a constant LoS link is typically challenging, which severely undermines system reliability~\cite{RefWorks:RefID:38-20203d}.
A traditional approach to mitigate this is the use of relay nodes to forward the signal around obstacles~\cite{RefWorks:RefID:21-2008relayassisted}.
However, conventional relays, which require their own laser source, power supply, and complicated pointing and tracking hardware, introduce substantial cost, complexity, and power consumption.
A more efficient solution has recently been proposed in the form of optical reconfigurable intelligent surfaces (ORIS).
An ORIS is a planar surface composed of a large number of passive reflecting elements, such as mirror arrays or metasurfaces, which can dynamically alter the phase and amplitude of an incident light beam.
By intelligently coordinating these reflections, an ORIS can effectively steer the optical beam towards a desired location, facilitating communications in a non-line-of-sight (NLoS) environment, for both terrestrial~\cite{RefWorks:RefID:7-2021intelligent} and non-terrestrial~\cite{RefWorks:RefID:43-2025optical} scenarios.

The integration of ORIS into SLIPT systems presents a promising avenue for overcoming the LoS limitation while retaining the benefits of optical power and data transfer.
Prior research has begun to explore this synergy, but the focus has been limited.
For instance, ORIS-assisted SLIPT has been considered for indoor systems using resonant beams~\cite{RefWorks:RefID:29-2024reconfigurable} and for short-range applications using LEDs~\cite{RefWorks:RefID:30-2025optical}.
These studies have established the foundational concept but have not addressed the unique challenges and opportunities of outdoor, long-range systems.
There has been research on laser-based SLIPT; however, this system can only be implemented when LoS is available and suffers from EH loss due to beam broadening over long distances~\cite{RefWorks:RefID:6-2024slipt}.
To the best of our knowledge, the design and performance analysis of an ORIS-assisted, laser-based SLIPT system for outdoor NLoS applications has not been investigated in the literature.

Against the above background, the novel contributions of this paper are as follows.
We propose an ORIS-assisted SLIPT architecture using FSO systems for ground-to-UAV scenarios. By applying linear and quadratic phase-shift (QPS) profiles, the ORIS can effectively control the beam width to improve EH while managing outage and error performance. Pointing errors and angle-of-arrival (AoA) fluctuations are modeled using generalized, independent Gaussian distributions with non-zero means, which may differ in their statistical parameters. We derive novel closed-form expressions for the end-to-end statistical channel, including the probability density function (PDF) and cumulative distribution function (CDF), while accounting for atmospheric attenuation, turbulence, pointing errors, and AoA fluctuations. To assess system performance with M-ary phase shift keying (MPSK), we present closed-form solutions for metrics such as harvested energy, outage probability, and symbol error rate (SER). 

\section{System Model}
\label{model}
\subsection{System Configurations}
As depicted in Fig.~\ref{fig:system_model}, we investigate a SLIPT-enabled ground-to-UAV FSO system, assisted by an ORIS.
We assume that there is no direct link between the ground station and the UAV due to obstructions, such as buildings or trees.
Hence, an ORIS is strategically deployed to establish two LoS links with the Tx and Rx, effectively overcoming the Tx-Rx LoS blockage. For simplicity, we consider the ORIS to be positioned midway between the ground station (Tx) located on a building rooftop and the UAV (Rx), with all components at the same altitude.
In 6G networks, UAVs are expected to operate as aerial base stations with FSO backhaul links~\cite{RefWorks:RefID:37-2023energy},~\cite{RefWorks:RefID:38-20203d} and millimeter-wave (mmWave) links for mobile ground users~\cite{RefWorks:RefID:41-2021lineofsight}, as illustrated in Fig.~\ref{fig:system_model}.
As battery-powered drones have limited flight time, EH becomes essential to ensure the uninterrupted service.
This work addresses the critical challenge of maintaining both ID and EH over the ORIS-assisted FSO backhaul link using SLIPT to support sustained UAV operation.

The ORIS reflects the optical signal from the ground station and directs it toward the UAV.
This ORIS is of the metasurface type, composed of LC molecules, which act as passive elements designed to manipulate the properties of the incident beam.
Two primary phase profiles are employed to manipulate the reflected beam: the linear phase shift (LPS) profile and the QPS profile.
The LPS profile facilitates the generalized Snell's law of reflection and redirection of the beam from the Tx to the Rx.
In contrast, the QPS profile focuses the optical beam at a distance from the ORIS, reducing the beam width of the reflected beam by applying a phase-shift profile that changes quadratically~\cite{RefWorks:RefID:43-2025optical}.
In the implemented situation, we assume the size of ORIS is larger than the incident beam so that the ORIS operates in saturated power scaling regime~\cite{RefWorks:RefID:40-2024optical}.

\begin{figure}
\centering
\includegraphics[width=\linewidth]{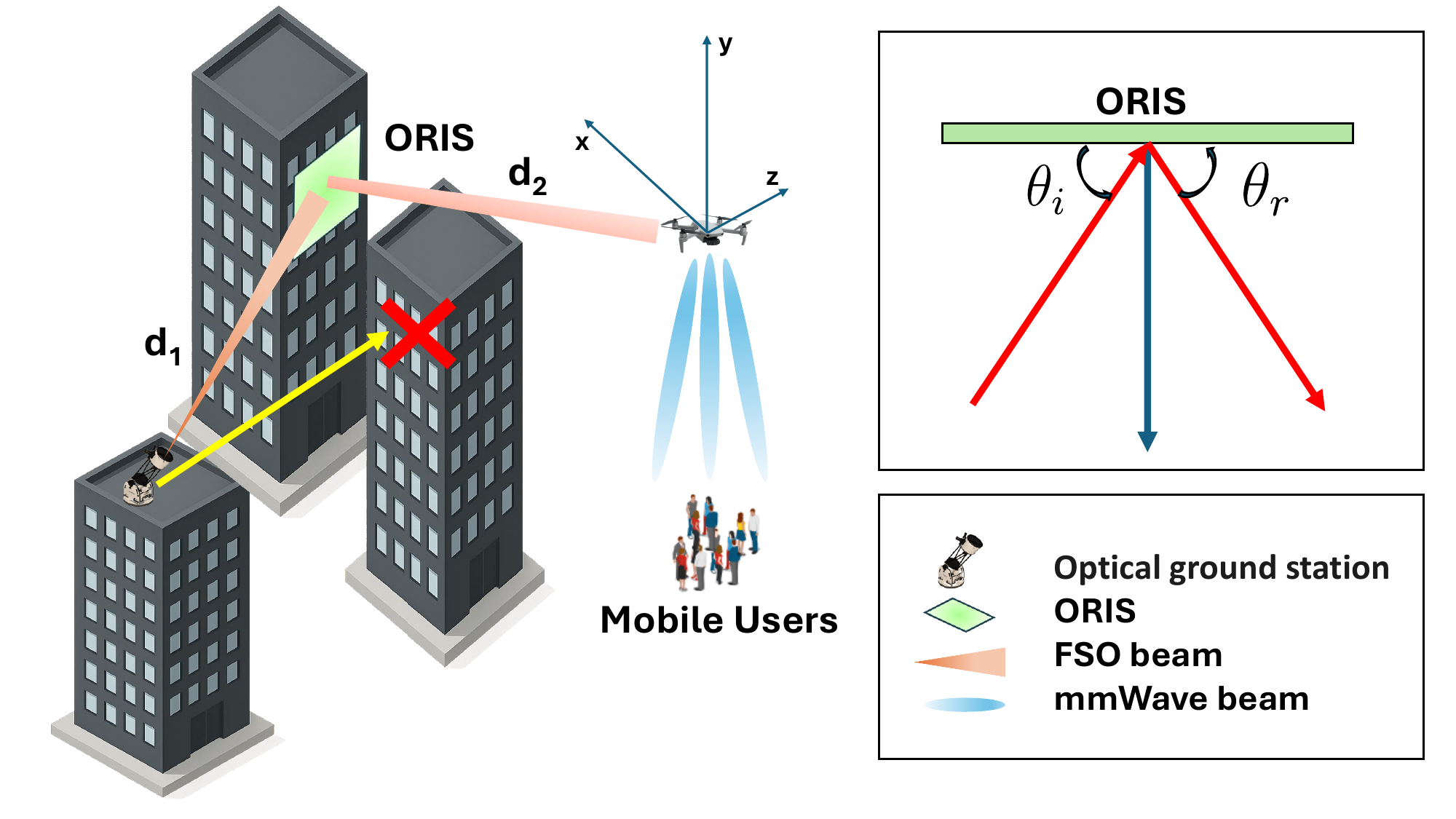}
\caption{Illustration of the SLIPT-enabled ground-to-UAV system in 6G mobile networks with UAV serving as an aerial base station. On the right is the view of ORIS from above.}
\label{fig:system_model}
\end{figure}

The Tx, ORIS, and Rx are located such that the distance $d_{1}$ between the Tx and the ORIS and the distance $d_{2}$ between the ORIS and the Rx have the relationship of $d_{1} = d_{2} = \frac{Z}{2}$.
Here, $Z$ is the total transmission distance.
The angle of incidence and reflection of the transmitted beam central ray at the ORIS are $\theta_{i}$ and $\theta_{r}$, respectively.
The system employs subcarrier intensity modulation with direct detection (SIM/DD). Information bits are first modulated using MPSK, and the resulting RF subcarrier is used to modulate the intensity of an optical subcarrier.
The modulated optical signal is expressed as
$P_t = S(B + x)$~\cite{RefWorks:RefID:9-2016optimal},
where $S$ is the slope efficiency of the laser diode, $x$ is the MPSK symbol with peak amplitude $A \in [0, \frac{I_H - I_L}{2}]$, and $B$ is a DC bias added to ensure non-negativity~\cite{RefWorks:RefID:3-2018simultaneous}. Here, $I_L$ and $I_H$ denote the minimum and maximum input bias currents, respectively.

The received optical power is given by $P_r = I P_t$, where $I$ is the combined channel coefficient.
Then, a photodiode (PD), integrated into the UAV Rx, converts this optical signal to an electrical current~\cite{RefWorks:RefID:6-2024slipt}
\begin{IEEEeqnarray}{rCl}
i_r &=& r I S B + r I S x + n \nonumber \\
&=& I'_{\text{DC}} + I_{\text{AC}} + n,
\end{IEEEeqnarray}
where $r$ is the PD responsivity, $n$ is zero-mean Gaussian noise, $I'_{\text{DC}} = r I S B$ is the DC component, and $I_{\text{AC}} = r I S x$ is the AC component~\cite{RefWorks:RefID:6-2024slipt}.

\subsection{SLIPT Strategies}
The Rx processes the incoming signal for both ID and EH using the TS-PS (TSPS) method, which is considered a practical method for SLIPT systems~\cite{RefWorks:RefID:6-2024slipt}.
In this method, TS and PS are combined wherein during the ID mode of the TS method, the DC component present in the transmitted signal is efficiently utilized for energy harvesting using a power splitter~\cite{RefWorks:RefID:6-2024slipt}.
The total DC component of the received electrical current used for EH is given by~\cite{RefWorks:RefID:9-2016optimal}
\begin{IEEEeqnarray}{rCl}
I_{\text{EH}} =
\begin{cases}
I'_{\text{DC}}, & \text{if DC component is used} \\
I'_{\text{DC}} + \zeta \bar{I}_{\text{AC}}, & \text{if AC, DC components are used}
\end{cases}
\end{IEEEeqnarray}
where $\zeta$ is the AC-to-DC conversion efficiency, $I'_{\text{DC}} = r I S B$ and $\bar{I}_{\text{AC}} = \mathbb{E}[I_{\text{AC}}] = r I S \mathbb{E}[x] = r I S A$.
\\\underline{ID Phase:}
During the ID phase, $T_{\text{ID}} = \tau T_{\text{tot}}$, where $\tau$ is the ratio of the ID phase duration to the total frame duration $T_{\text{tot}}$, the Tx sends both AC and DC components.
The received power is split into two streams by the factor $\rho$ simultaneously for ID and EH, written as
\begin{IEEEeqnarray}{rCL}
I_{\text{ID}} &=& \rho (\bar{I}_{\text{AC}} + n) \nonumber\\
I_{\text{EH}} &=& (1 - \rho)(I'_{\text{DC}} + \zeta \bar{I}_{\text{AC}}).
\label{eq:i_id_tsps}
\end{IEEEeqnarray}
where $I_{\text{ID}}$ and $I_{\text{EH}}$ are the currents for ID and EH, respectively and $\rho$ is the power splitting ratio.
\underline{EH Phase:}
During the EH phase with a duration of $T_{EH}$, only DC is transmitted (i.e., $A = 0$, $B = I_H$). Thus, $I_{\text{EH}} = I'_{\text{DC}}$.
Therefore, the total EH current over $T_{\text{tot}}$ is given by
$I_{\text{EH}} = (1 - \rho)(I'_{\text{DC}} + \zeta \bar{I}_{\text{AC}}) + I'_{\text{DC}}$~\cite{RefWorks:RefID:6-2024slipt}.

\section{Channel Model}
Upon reception, the optical signal is degraded by several independent channel impairments:
i) atmospheric path loss,
ii) turbulence-induced fading,
iii) pointing errors from UAV hovering,
iv) link interruption due to AoA fluctuations.
The combined channel model of the ground-to-UAV FSO system is given by
$I = h I_{\text{al}} I_{\text{pl}} I_{\text{AoA}}$,
where $h$ is the atmospheric path loss, $I_{\text{al}}$ is the turbulence-induced fading, $I_{\text{pl}}$ is the pointing error loss, and $I_{\text{AoA}}$ is the link interruption due to AoA fluctuations.

\subsection{Atmospheric Path Loss and Turbulence-Induced Fading}
For an optical link with a distance $d$ and an attenuation parameter $\zeta_1$, the atmospheric path loss is given by
$h = \exp(-\zeta_1 d)$~\cite{RefWorks:RefID:6-2024slipt}.
The optical turbulence between the Tx, ORIS, and the Rx can be modeled using the generalized Malaga distribution with the PDF written as
$f_{\text{al}}(I_{\text{al}}) = A_M\!\sum_{m=1}^{\beta}\!a_m I_{\text{al}}^{\frac{\alpha + m}{2} - 1} K_{\alpha - m}\left( 2 \sqrt{ \frac{\alpha \beta I_{\text{al}} }{g\beta + \Omega'}} \right)$~\cite{RefWorks:RefID:10-2016performance},
where
$A_M = \frac{2 \alpha^{\alpha/2}}{g^{1 + \alpha/2} \Gamma(\alpha)} \left( \frac{g \beta}{g \beta + \Omega'} \right)^{\beta + \alpha/2}$, 
$a_m = \binom{\beta - 1}{m - 1} \frac{(g \beta + \Omega')^{1 - \frac{m}{2}}}{(m - 1)!} \left( \frac{\Omega'}{g} \right)^{m - 1} \left( \frac{\alpha}{\beta} \right)^\frac{m}{2}$, and 
$g = 2 b_0 (1 - \rho_1)$. 
Furthermore, $\rho_1$ is the amount of scattering power, and $2 b_0$ is the average power of the total scatter component.
$K_v$ is the modified Bessel function of the second kind of order $v$.
$\alpha$ and $\beta$ are large-scale and small-scale scattering parameters~\cite{RefWorks:RefID:13-boluda-ruiz2016approximation},
\begin{IEEEeqnarray}{rCl}
\alpha &= \left[ \exp\left( {0.49 \sigma_R^2}/{\left(1 + 1.11 \sigma_R^{12/5}{}\right)^{7/6}} \right) - 1 \right]^{-1} \nonumber\\
\beta  &= \left[ \exp\left( {0.51 \sigma_R^2}/{\left(1 + 0.69 \sigma_R^{12/5}{}\right)^{5/6}} \right) - 1 \right]^{-1}, \nonumber
\end{IEEEeqnarray}
where $\sigma_R^{2}{} = 1.23 C_n^{2}{} k^{7/6}{}d^{11/6}$ is the Rytov variance of the refractive index structure parameter.
Here, $C_n^{2}{}$ is the refractive index structure parameter, $k=2 \pi / \lambda$ is the wave number, and $d$ is the propagation distance in meters~\cite{RefWorks:RefID:13-boluda-ruiz2016approximation}.
Since we assume the equal distance for the Tx-ORIS and ORIS-Rx paths, similar propagation altitudes, and a passive ORIS reflection, the overall turbulence-induced fading can be approximated by a single equivalent channel coefficient.

\subsection{Pointing Errors with ORIS}
Pointing errors occur due to the radial displacement vector from the center of the beam and receiving aperture $r_d = [x_d, y_d]$, which results from two error vectors~\cite{RefWorks:RefID:16-2022comprehensive}:
i) Displacement vector due to Tx's position deviation due to the building sway $\mathbf{r}_t = [x_t, y_t]$, and
ii) Displacement vector due to Rx's position deviation due to UAV hovering $\mathbf{r}_r = [x_r, y_r]$.
Hence, the total displacements along the $x$ (horizontal) and $y$ (vertical) axes are, respectively, given by $x_d = x_t + x_r$ and $y_d = y_t + y_r$.
Since the position and orientation deviations occur due to random events, they are assumed to follow Gaussian distributions as per the central limit theorem.
In the most general case, $x_t$, $x_r$, $y_t$, and $y_r$ are non-zero-mean Gaussian random variables (RVs), i.e.,
$x_t \sim \mathcal{N}(\mu_{txp}, \sigma_{txp}^2)$, $x_r \sim \mathcal{N}(\mu_{rxp}, \sigma_{rxp}^2)$, $y_t \sim \mathcal{N}(\mu_{typ}, \sigma_{typ}^2)$, $y_r \sim \mathcal{N}(\mu_{ryp}, \sigma_{ryp}^2)$.
Consequently, the displacement vector also becomes a non-zero-mean Gaussian RV, i.e., $x_d \sim \mathcal{N}(\mu_x, \sigma_x^2), y_d \sim \mathcal{N}(\mu_y, \sigma_y^2)$,
where $\mu_x = \mu_{txp} + \mu_{rxp}$ and $\mu_y = \mu_{typ} + \mu_{ryp}$ are the mean values and $\sigma_x^2 = \sigma_{txp}^2 + \sigma_{rxp}^2$ and $\sigma_y^2 = \sigma_{typ}^2 + \sigma_{ryp}^2$ are the variances of the displacement vector in $x$ and $y$ directions, respectively.
Under the assumption that the random displacements are independent and non-identical, the total radial displacement is characterized by the Beckmann distribution but can be conveniently approximated by a modified Rayleigh distribution with the PDF given by~\cite{RefWorks:RefID:13-boluda-ruiz2016approximation}
\begin{IEEEeqnarray}{rCl}
f_{r_d}(r_d) = \frac{r_d}{2 \sigma_m^2} \exp\left(-\frac{r_d^2}{2 \sigma_m^2}\right), r_d \ge 0,
\label{eq:perayleighpdf}
\end{IEEEeqnarray}
where the total displacement variance $\sigma_m$ can be expressed as
\begin{IEEEeqnarray}{rCl}
\sigma_m^2 &=& \left(\frac{3\mu_x^2\sigma_x^4 + 3\mu_y^2\sigma_y^4 + \sigma_x^6 + \sigma_y^6}{2}\right)^\frac{1}{3}.
\label{eq:displacementvariance}
\end{IEEEeqnarray}
Note that \eqref{eq:perayleighpdf} and \eqref{eq:displacementvariance} represent a more general pointing-error model than (18) and (19) in~\cite{RefWorks:RefID:6-2024slipt}.
(19) in~\cite{RefWorks:RefID:6-2024slipt} is also inaccurate, following the approach in~\cite{RefWorks:RefID:13-boluda-ruiz2016approximation}.

The instantaneous loss due to radial displacement, considering the optical beam after being reflected by the ORIS with QPS profile, can be expressed as~\cite{RefWorks:RefID:43-2025optical}
\begin{IEEEeqnarray}{rCl}
I_{\text{pl}} = A_x A_y \exp\left(-\frac{2 x_d^2}{w_{rx,x(eq)}^{QPS}{}^2} - \frac{2 y_d^2}{w_{rx,y(eq)}^{QPS}{}^2}\right),
\label{eq:pointingerrorloss}
\end{IEEEeqnarray}
where the equivalent beam widths at the Rx for the ORIS are given by~\cite{RefWorks:RefID:43-2025optical} as $w_{rx,x(eq)}^{QPS}{} = w_{rx,x_d}^{QPS}{} \sqrt{\frac{\sqrt{\pi}A_x}{2 \nu_x \exp(-\nu_x^2)}}$ and $w_{rx,y(eq)}^{QPS}{} = w_{rx,y_d}^{QPS}{} \sqrt{\frac{\sqrt{\pi}A_y}{2 \nu_y \exp(-\nu_y^2)}}$.
$A_x = erf \left(\frac{r_a \sqrt{\pi}}{\sqrt{2} w_{rx,x}^{QPS}}\right)$ and $A_y = erf \left(\frac{r_a \sqrt{\pi}}{\sqrt{2} w_{rx,y}^{QPS}}\right)$ are the area of the receiving aperture in $x$ and $y$ directions, respectively, and $r_a$ is the radius of the receiving aperture.
The beam widths at the Rx is now expressed as~\cite{RefWorks:RefID:43-2025optical},~\cite{RefWorks:RefID:13-boluda-ruiz2016approximation}
\begin{IEEEeqnarray}{rCl}
w_{rx,x}^{QPS}{} &=& w(d_1) \frac{\left|\sin \theta_r\right|}{\left|\sin \theta_i\right|} \sqrt{\epsilon \left(\frac{\sin ^2 \theta_i}{\sin^2 \theta_r} \Lambda_1\right)^2 + \left(\frac{d_2}{2 f}\right)^2} \nonumber \\
w_{rx,y}^{QPS}{} &=& w(d_1) \sqrt{\epsilon \Lambda_1^2 + \left(\frac{d_2}{2 f}\right)^2},
\end{IEEEeqnarray}
where for a Gaussian beam propagating in atmospheric turbulence, the beam footprint at the ORIS can be expressed as~\cite{RefWorks:RefID:14-2007outage}
$w(d_1) = w_0 \left[1 + \epsilon \left(\frac{\lambda d_1}{\pi w_0^{2}{}}\right)\right] ^\frac{1}{2}$
where $w_0$ is the beam waist radius at the Tx, $\epsilon= (1 + 2 w_0^2/\rho_0^{2}{}(z))$ is the pointing error, $\rho_{0}(z) = (0.55 C_n^{2}{} k^2 z) ^{-3/5}$ is the coherent length.
Here, $f$ is the focal length of the ORIS. When $f=\frac{d_2}{2}$, it is equivalent to the LPS profile, which redirects the beam to the Rx without focusing it~\cite{RefWorks:RefID:43-2025optical},~\cite{RefWorks:RefID:40-2024optical}.

From \eqref{eq:perayleighpdf} and \eqref{eq:pointingerrorloss}, the PDF of the pointing error can be expressed as
$f_{I_{\text{pl}}}(I_{\text{pl}}) = {\zeta_{\text{mod}}^2} I_{\text{pl}}^{\zeta_{mod}^2 - 1}/{(A_{mod})^{\zeta_{mod}^2}}$~\cite{RefWorks:RefID:13-boluda-ruiz2016approximation},
where $\zeta_{\text{mod}} = {\sqrt{w_{rx,x(eq)}^{QPS}{} w_{rx,y(eq)}^{QPS}{}}}/{2 \sigma_m}$ and $A_{mod} = A_x A_y \Phi$
and $\Phi = \exp\left(\frac{1}{\zeta_{\text{mod}}^2} - \frac{1}{2 \varphi_x^2} - \frac{1}{2 \varphi_y^2} - \frac{\mu_x^2}{2\sigma_x^2\varphi_x^2} - \frac{\mu_y^2}{2 \sigma_y^2\varphi_y^2}\right)$, $\varphi_x = {w_{rx,x(eq)}^{QPS}{}}/{2 \sigma_m}$, and $\varphi_y = {w_{rx,y(eq)}^{QPS}{}}/{2 \sigma_m}$.

\subsection{AoA Fluctuations}
AoA fluctuation is a phenomenon that occurs when the UAV's orientation deviates from the normal line to the PD because of the UAV's hovering instability.
This deviation can cause the received beam to be outside the field-of-view (FOV) of the detector, leading to the loss of a signal.
This phenomenon can be modeled using the link interruption coefficient $I_{\text{AoA}}$, written as
\begin{IEEEeqnarray}{rCl}
I_{\text{AoA}} =
\begin{cases}
1, & \text{if } \theta_a \leq \theta_{\text{FOV}} \\
0, & \text{if } \theta_a \geq \theta_{\text{FOV}}
\end{cases}
\label{eq:AoAindicator}
\end{IEEEeqnarray}
where $\theta_{\text{FOV}}$ is the UAV's FOV and $\theta_a = \sqrt{\theta_{rx}^2 + \theta_{ry}^2}$ is the AoA.
$\theta_{rx}$ and $\theta_{ry}$ are Gaussian RVs such that $\theta_{rx} \sim \mathcal{N}(\theta'_{rx}, \sigma_{rxa}^2)$ and $\theta_{ry} \sim \mathcal{N}(\theta'_{ry}, \sigma_{rya}^2)$ where $\theta'_{rx}$ and $\theta'_{ry}$ are the UAVs boresight angles in x and y directions, respectively, and $\sigma_{rxa}$ and $\sigma_{rya}$ are the SD of the UAV orientation in the $x-z$ and $y-z$ planes, respectively.

The PDF of $\theta_a$ is modeled by Beckmann distribution, which is approximated using a modified Rayleigh distribution as
\begin{IEEEeqnarray}{rCl}
f_{\theta_a}(\theta_a) \approx \frac{\theta_a}{\sigma_a^2} \exp\left(-\frac{\theta_a^2}{2 \sigma_a^2}\right), \quad \theta_a \geq 0,
\label{eq:thetaa_pdf}
\end{IEEEeqnarray}
where
\begin{IEEEeqnarray}{rCl}
\sigma_a^2 = \left( \frac{3 \theta_{rx}'^2 \sigma_{rxa}^4 + 3 \theta_{ry}'^2 \sigma_{rya}^4 + \sigma_{rxa}^6 + \sigma_{rya}^6}{2} \right)^{1/3}.
\end{IEEEeqnarray}
Using \eqref{eq:AoAindicator} and \eqref{eq:thetaa_pdf}, the PDF of AoA fluctuations $f_{I_{\text{AoA}}}(I_{\text{AoA}})$ is given as
$f_{I_{\text{AoA}}}(I_{\text{AoA}}) = a_1 \delta(I_{\text{AoA}}) + (1 - a_1) \delta(I_{\text{AoA}} - 1)$,
where $a_1 = \exp\left(-\frac{\theta_{\text{FOV}}^2}{2 \sigma_a^2}\right)$ and $\delta(\cdot)$ is the Dirac delta function.

\subsection{Combined Channel Model}
The PDF of the combined channel model of the ground-to-UAV FSO system is given by~\cite{RefWorks:RefID:16-2022comprehensive}
\begin{IEEEeqnarray}{rCl}
f_I(I) &\approx& a_1 \delta(I) + \frac{(1 - a_1) \zeta_{\text{mod}}^2 A_M}{2I} \sum_{m=1}^{\beta} b_m \nonumber\\
&&\times G_{1,3}^{3,0} \left( \left. \frac{B_1 I}{h A_{\text{mod}}} \right| \begin{array}{c}
1 + \zeta_{\text{mod}}^2 \\
\zeta_{\text{mod}}^2, \alpha, m
\end{array} \right),
\label{eq:combinedpdf}
\end{IEEEeqnarray}
where $B_1 = \frac{\alpha \beta}{g \beta + \Omega'}$, $\Omega' = \Omega + 2 b_0 \rho_1 + 2 2\sqrt{2 b_0 \rho_1 \Omega} \cos(\phi_a - \phi_b)$ where $\Omega$ is the average power of LoS component.
$b_m = a_m B^{-\frac{\alpha + m}{2}}$, and $G^{m,n}_{p,q}(z)$ is the Meijer-G function.
The instantaneous SNR is given as $\gamma = \frac{\bar{\gamma} I^2}{(k_1 A_{\text{mod}} h)^2}$ where $k_1 = \frac{\zeta_{\text{mod}}^2}{\zeta_{\text{mod}}^2 + 1}(g + \Omega')(1 - a_1)$.

\begin{lemma}
\label{Lemma1}
The PDF of the instantaneous SNR is obtained as
\begin{IEEEeqnarray}{rCl}
f_\gamma(\gamma) &\approx& a_1 \delta\left( \sqrt{\frac{\gamma}{\bar{\gamma}}} k_1 A_{\text{mod}} h \right)
+ (1 - a_1) \frac{\zeta_{\text{mod}}^2 A_M}{4 \gamma} \nonumber\\
&&\times \sum_{m=1}^{\beta} b_m G_{1,3}^{3,0} \left( \left. k_1 B_1 \sqrt{\frac{\gamma}{\bar{\gamma}}} \right| \begin{array}{c}
\zeta_{\text{mod}}^2 + 1 \\
\zeta_{\text{mod}}^2, \alpha, m
\end{array} \right).
\label{eq:snrpdf}
\end{IEEEeqnarray}
\end{lemma}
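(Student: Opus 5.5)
The plan is a change of variables from the combined channel coefficient $I$ to the instantaneous SNR $\gamma$, using the combined PDF \eqref{eq:combinedpdf}. From $\gamma = \bar{\gamma} I^2/(k_1 A_{\text{mod}} h)^2$, which is monotone on $I\ge 0$, I would invert to
\[
I = g(\gamma) = k_1 A_{\text{mod}} h \sqrt{\gamma/\bar{\gamma}}, \qquad \left|\frac{dI}{d\gamma}\right| = \frac{k_1 A_{\text{mod}} h}{2\sqrt{\gamma\bar{\gamma}}}.
\]
Then I would apply $f_\gamma(\gamma) = f_I(g(\gamma))\,|dI/d\gamma|$ separately to the two parts of \eqref{eq:combinedpdf}: the atom $a_1\delta(I)$ from the AoA interruption, and the continuous Malaga-times-pointing-error part weighted by $(1-a_1)$.

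For the continuous part, substituting $I=g(\gamma)$ into the prefactor $\frac{(1-a_1)\zeta_{\text{mod}}^2 A_M}{2I}$ and multiplying by the Jacobian gives
\[
\frac{1}{2I}\cdot\frac{k_1A_{\text{mod}}h}{2\sqrt{\gamma\bar{\gamma}}} = \frac{1}{2k_1A_{\text{mod}}h\sqrt{\gamma/\bar{\gamma}}}\cdot\frac{k_1A_{\text{mod}}h}{2\sqrt{\gamma\bar{\gamma}}} = \frac{1}{4\gamma}.
\]
This yields the claimed factor $(1-a_1)\zeta_{\text{mod}}^2A_M/(4\gamma)$. Inside the Meijer-G function the argument $B_1 I/(hA_{\text{mod}})$ becomes $B_1 k_1\sqrt{\gamma/\bar{\gamma}}$, because the $hA_{\text{mod}}$ factors cancel. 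The parameters $\{1+\zeta_{\text{mod}}^2;\ \zeta_{\text{mod}}^2,\alpha,m\}$ are unchanged, and the sum over $m$ with coefficients $b_m$ passes through by linearity. This reproduces the second term of \eqref{eq:snrpdf}.

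For the atom, I would carry $a_1\delta(I)$ through formally by writing $\delta(I)=\delta(g(\gamma))$. This is consistent with the displayed first term $a_1\delta\bigl(\sqrt{\gamma/\bar\gamma}\,k_1A_{\text{mod}}h\bigr)$, which keeps the delta composed with the inverse map.

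The main obstacle is this delta term, for two reasons.
\begin{itemize}
\item A rigorous treatment would also include the Jacobian, i.e.\ use $\delta(g(\gamma))\,|g'(\gamma)|$. Since $g'(\gamma)$ blows up at $\gamma=0$, the delta should really be interpreted as a point mass $a_1$ at $\gamma=0$.
\item Checking that the total mass is preserved, namely $a_1 + (1-a_1)\int f_{\text{cont}} = 1$, requires the normalization of the continuous part. That normalization is inherited from \eqref{eq:combinedpdf} and its Meijer-G approximation, which is why the lemma is stated with $\approx$.
\end{itemize}
I would note that the representation in the lemma is the formal image of the atom under the map, and that it contributes the outage floor $a_1$ in later CDF computations.
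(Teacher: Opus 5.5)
Your proposal is correct and follows the same route as the paper, whose proof simply substitutes $I = k_1 A_{\text{mod}} h\sqrt{\gamma/\bar{\gamma}}$ into \eqref{eq:combinedpdf}; your explicit Jacobian computation giving the factor $1/(4\gamma)$, and the cancellation of $hA_{\text{mod}}$ in the Meijer-G argument, are exactly the steps that proof leaves implicit. Your remark that the delta term is only a formal image and should be read as a point mass $a_1$ at $\gamma=0$ (which is how it enters the CDF in Lemma~\ref{Lemma2}) is a correct refinement of the paper's displayed expression.
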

\begin{proof}
See Appendix \ref{appendix_C}.
\end{proof}

\begin{lemma}
\label{Lemma2}
The CDF of the instantaneous SNR is given by
\begin{IEEEeqnarray}{rCl}
F_\gamma(\gamma) &\approx& a_1 + (1 - a_1) \frac{\zeta_{\text{mod}}^2 A_M}{8\pi} \nonumber\\
&&\!\times\!\sum_{m=1}^{\beta}\!b_m 2^{\alpha + m - 1}\!
G_{3,7}^{6,1}\!\left(\!\left. \frac{(k_1 B_1)^2 \gamma}{16 \bar{\gamma}} \right|
\begin{array}{c}\!
1, K_2 \\
K_3, 0
\end{array}\!\right)\!,
\label{eq:snrcdf}
\end{IEEEeqnarray}
where $K_2\!=\!\left[\!\frac{\zeta_{\text{mod}}^2 + 1}{2}, \frac{\zeta_{\text{mod}}^2 + 2}{2}\!\right]$, $K_3 = \left[ \frac{\zeta_{\text{mod}}^2}{2}, \frac{\zeta_{\text{mod}}^2 + 1}{2}, \frac{\alpha}{2}, \frac{m}{2}, \frac{m + 1}{2} \right]$.
\end{lemma}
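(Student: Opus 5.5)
The plan is to integrate the SNR PDF of Lemma~\ref{Lemma1} directly, handling the atom and the continuous part separately. The Dirac term $a_1\delta(\cdot)$ sits at $\gamma=0$, so integrating it over $[0,\gamma]$ contributes the constant $a_1$ for every $\gamma\ge 0$. We therefore have to evaluate
\[
(1-a_1)\frac{\zeta_{\text{mod}}^2 A_M}{4}\sum_{m=1}^{\beta} b_m \int_0^{\gamma} \frac{1}{t}\, G_{1,3}^{3,0}\!\left( k_1 B_1 \sqrt{\tfrac{t}{\bar{\gamma}}} \,\middle|\, \begin{array}{c}\zeta_{\text{mod}}^2+1\\ \zeta_{\text{mod}}^2,\alpha,m\end{array}\right) dt .
\]

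The difficulty is that the argument of the Meijer-G depends on $\sqrt{t}$, so the standard primitive $\int_0^x t^{s-1}G(\eta t)\,dt$ does not apply directly.

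\begin{enumerate}
\item Convert the square-root argument into a linear one using the Gauss multiplication (duplication) formula. In Mellin--Barnes form, write $\Gamma(b_j+s)$ with $s\to 2s'$ as $\Gamma(2(b_j/2+s'))=\pi^{-1/2}2^{b_j+2s'-1}\Gamma(b_j/2+s')\Gamma(b_j/2+s'+1/2)$. Equivalently, invoke the known identity $G^{m,n}_{p,q}(z^{1/2})\to G^{2m,2n}_{2p,2q}$ with argument $z/4^{q-p}$.
\item Apply this to $G^{3,0}_{1,3}$. The argument becomes $(k_1B_1)^2 t/(16\bar{\gamma})$, with $16=4^{3-1}$. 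The parameter lists become $\{(\zeta_{\text{mod}}^2+1)/2,(\zeta_{\text{mod}}^2+2)/2\}$ upstairs and $\{\zeta_{\text{mod}}^2/2,(\zeta_{\text{mod}}^2+1)/2,\alpha/2,(\alpha+1)/2,m/2,(m+1)/2\}$ downstairs.
\item Collect the multiplicative factors. They are $\pi^{-1/2}$ to the power $q-p=2$, giving $1/\pi$, and $2^{\sum b_j-\sum a_j+ (p-q+1)/2}$ combined with the $1/(2\pi)^{\dots}$ normalization. This is where the stated prefactor $2^{\alpha+m-1}/(8\pi)$ should emerge.
\item With the argument now linear in $t$, use the standard integral $\int_0^x t^{-1} G^{m,n}_{p,q}(\eta t)\,dt = G^{m,n+1}_{p+1,q+1}\!\left(\eta x \,\middle|\, 1,\mathbf{a};\mathbf{b},0\right)$. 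This raises $G^{6,0}_{2,6}$ to $G^{6,1}_{3,7}$, with the new upper parameter $1$ in front and the new lower parameter $0$ at the end, which matches the structure $\{1,K_2;K_3,0\}$.
\end{enumerate}

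The main obstacle is the constant bookkeeping in the duplication step, together with reconciling the parameter lists. $K_3$ as printed has five entries while $G^{6,1}_{3,7}$ needs six lower parameters before the appended $0$. I expect the duplication of $\alpha$ to yield both $\alpha/2$ and $(\alpha+1)/2$, so I would verify whether the omission is a typo. I would confirm the prefactor by checking the Mellin transform of both sides at a convenient $s$, or numerically against direct integration of Lemma~\ref{Lemma1}.

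Convergence of the integral at $t\to0$ follows because all lower parameters are positive, so the Meijer-G decays like a positive power of $t$ near zero.
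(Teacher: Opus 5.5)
Your plan is the paper's own argument (its proof only says to integrate the PDF of Lemma~\ref{Lemma1} over $[0,\gamma]$), and you supply the omitted Meijer-G steps correctly: the atom gives $a_1$, duplication makes the argument linear, and the $\int_0^x t^{-1}G\,dt$ primitive finishes it; you are also right that $K_3$ is missing $(\alpha+1)/2$. For the constant, your step~3 exponent is off: doing the duplication directly in the Mellin--Barnes integral and including the Jacobian $ds=2\,ds'$ gives the power $2^{\sum b_j-\sum a_j-1}$, i.e.\ $G^{3,0}_{1,3}(z)=\frac{2^{\alpha+m-2}}{\pi}G^{6,0}_{2,6}\bigl(\frac{z^2}{16}\bigr)$, so together with the $\frac{1}{4}$ from Lemma~\ref{Lemma1} the prefactor is exactly $\frac{2^{\alpha+m-1}}{8\pi}$.
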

\begin{proof}
See Appendix \ref{appendix_D}.
\end{proof}

\begin{remark}
The parameters $K_2$ and $K_3$ in Lemma~\ref{Lemma2} incorporate the effect of ORIS, and are generalized to pointing errors with a non-zero mean compared to~\cite{RefWorks:RefID:6-2024slipt}.
\end{remark}

\section{Performance Analysis}
\label{performance}
\subsection{Harvested Energy}
\begin{corollary}
\label{corollary1}
In the TSPS method, the total energy harvested is given by
\begin{IEEEeqnarray}{rCl}
E_{TSPS} &=& \tau E_1 \sum_{m=1}^{\beta} b_m \text{G}_{3,5}^{5,1}\left( C \middle| \begin{array}{c} K_4 \\ K_5 \end{array} \right) \nonumber\\
&&\hspace{-1cm} +(1 - \tau)E_1\sum_{m=1}^{\beta}b_m\text{G}_{3,5}^{5,1}\left(\frac{B_1 I_0}{r h S I_H A_{\text{mod}}}\middle|\begin{array}{c}K_4\\K_5\end{array}\right),
\end{IEEEeqnarray}
where
$C = \frac{B_1 I_0}{h A_{\text{mod}} (1 - \rho)(B + \zeta A) r S}$, $E_1 = T_{tot} F V_t I_0 (1 - a_1) \frac{\zeta_{\text{mod}}^2 A_M}{2}$, $K_4 = \left[-1, 0, 1 + \zeta_{\text{mod}}^2\right]$, $K_5 = \left[\zeta_{\text{mod}}^2, \alpha, m, -1, -1\right]$, and
$b_m = a_m B^{-\frac{\alpha+m}{2}}$, $B_1 = \frac{\alpha \beta}{g\beta + \Omega'}$.
\end{corollary}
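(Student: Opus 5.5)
We will use the standard nonlinear (logarithmic) diode model for the harvested energy, as in the SLIPT literature: the energy harvested over a phase of length $T$ with EH current $I_{\text{EH}}$ is $E = T F V_t I_{\text{EH}} \ln\!\left(1 + I_{\text{EH}}/I_0\right)$. Here $F$ is the fill factor, $V_t$ the thermal voltage and $I_0$ the dark saturation current. Since $I_{\text{EH}}$ depends on the random channel $I$, the average harvested energy is $\mathbb{E}_I[E]$, computed with the combined PDF \eqref{eq:combinedpdf}. We split $E_{TSPS}$ into the ID phase (duration $\tau T_{tot}$, $I_{\text{EH}} = (1-\rho)(B+\zeta A) r S I$ from \eqref{eq:i_id_tsps}) and the EH phase (duration $(1-\tau)T_{tot}$, $I_{\text{EH}} = r S I_H I$). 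The two phases have the same functional form $c I \ln(1 + cI/I_0)$, so one computation covers both, with $c=(1-\rho)(B+\zeta A)rS$ or $c = rSI_H$.

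\textbf{Reduction to a single integral.} Insert \eqref{eq:combinedpdf}. The atom $a_1\delta(I)$ contributes nothing because $I\ln(1+cI/I_0)$ vanishes at $I=0$. This leaves
\begin{IEEEeqnarray}{rCl}
\mathbb{E}[E] &=& T F V_t (1-a_1)\frac{\zeta_{\text{mod}}^2 A_M}{2}\sum_{m=1}^{\beta} b_m \int_0^\infty c\ln\!\left(1+\tfrac{cI}{I_0}\right) \nonumber\\
&&\times G_{1,3}^{3,0}\!\left(\tfrac{B_1 I}{hA_{\text{mod}}}\middle|\begin{array}{c}1+\zeta_{\text{mod}}^2\\ \zeta_{\text{mod}}^2,\alpha,m\end{array}\right) dI. \nonumber
\end{IEEEeqnarray}
Note that the factor $1/I$ in the PDF cancels the $I$ in the energy expression. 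We then change variables to $u = cI/I_0$. This produces the prefactor $I_0$ in $E_1$ and turns the Meijer-G argument into $\frac{B_1 I_0}{c\,hA_{\text{mod}}}u$, which equals $C u$ in the ID phase and $\frac{B_1 I_0}{rhSI_HA_{\text{mod}}}u$ in the EH phase.

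\textbf{Closed form via Meijer-G.} We write $\ln(1+u) = G_{2,2}^{1,2}\!\left(u\middle|\begin{smallmatrix}1,1\\1,0\end{smallmatrix}\right)$ and apply the standard integral of a product of two Meijer-G functions [Prudnikov et al., (2.24.1.1)]. We then use the inversion $G(z)\to G(1/z)$ to put the resulting argument in the stated form $C$ (resp. $\frac{B_1 I_0}{rhSI_HA_{\text{mod}}}$). The order should become $G_{3,5}^{5,1}$: the $3,0$ part of $G_{1,3}^{3,0}$ combines with the $1,2$ part of the logarithm's G. The parameter lists $K_4=[-1,0,1+\zeta_{\text{mod}}^2]$ and $K_5=[\zeta_{\text{mod}}^2,\alpha,m,-1,-1]$ should come out after shifting by the exponent absorbed in the change of variables. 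Multiplying each phase by its weight $\tau$ or $(1-\tau)$ and collecting constants into $E_1$ then gives the statement.

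\textbf{Main obstacle.} The delicate step is the bookkeeping in the Meijer-G product integral. Three things must be tracked: the power-of-$u$ shift, the inversion of arguments with the associated parameter transformation $a_j\to 1-b_j$, and the convergence conditions. Only after this can we verify that the indices reduce exactly to $G_{3,5}^{5,1}$ with $K_4,K_5$ as stated. We would first carry out the computation with a generic power $u^{s}$ to fix the shifts, then specialize. Convergence of the integral needs to be checked against the growth conditions on $\alpha$, $m$ and $\zeta_{\text{mod}}^2$.
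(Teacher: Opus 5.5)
Your proposal is correct and follows the same route as the paper's appendix. Both use the solar-cell model $E = T F V_t I_{\text{EH}}\ln(1+I_{\text{EH}}/I_0)$ in each phase, average it over the combined PDF (the $1/I$ cancels the factor $I$ and the $a_1\delta(I)$ atom drops out), write $\ln(1+u)=G_{2,2}^{1,2}$, and apply the Meijer-G product integral. The bookkeeping you flag closes directly: if the logarithm's $G$ is the unit-argument factor in the product formula with exponent $s=1$, then $\int_0^\infty \ln(1+u)\,G_{1,3}^{3,0}(Cu\,|\,\cdot)\,du = G_{3,5}^{5,1}(C\,|\,K_4;K_5)$ with exactly the stated parameters and no inversion.
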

\begin{proof}
See Appendix \ref{appendix_A}.
\end{proof}
\begin{remark}
The parameters $K_4$ and $K_5$ incorporate the effect of ORIS and generalized pointing errors with a non-zero mean.
Here, the incorrect parameter order is corrected for $K_4$ and $K_5$ as compared to (33) in~\cite{RefWorks:RefID:6-2024slipt}.
\end{remark}

\subsection{Outage Probability}
\begin{corollary}
\label{Corollary2}
The outage probability is expressed as
\begin{equation}
P_{out}(\gamma_{th}) = F_\gamma(\gamma_{th}),
\end{equation}
where $F_\gamma(\gamma)$ is the CDF of the SNR given in Lemma~\ref{Lemma2}.
\end{corollary}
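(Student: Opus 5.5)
The claim follows directly from the definition of outage probability together with Lemma~\ref{Lemma2}. An outage occurs when the instantaneous SNR falls below a prescribed threshold $\gamma_{th}$, so we define $P_{out}(\gamma_{th}) = \Pr\{\gamma < \gamma_{th}\}$. By definition of the CDF, $\Pr\{\gamma \le \gamma_{th}\} = F_\gamma(\gamma_{th})$. The first step is therefore to check that the strict and non-strict inequalities give the same probability, i.e.\ that $\Pr\{\gamma = \gamma_{th}\}=0$ for any $\gamma_{th}>0$.

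To verify this, we use the structure of $f_\gamma$ in Lemma~\ref{Lemma1}. The only atom of the distribution comes from the term $a_1\delta(\cdot)$, which corresponds to $I_{\text{AoA}}=0$ and hence to $I=0$, i.e.\ to $\gamma=0$. The remaining part is absolutely continuous, with density proportional to the Meijer-G term. Therefore, for every $\gamma_{th}>0$, the event $\{\gamma=\gamma_{th}\}$ has probability zero, and $\Pr\{\gamma<\gamma_{th}\}=\Pr\{\gamma\le\gamma_{th}\}$. The atom at zero lies inside the outage event for every positive threshold. This is exactly why the constant $a_1$ appears additively in the closed form of Lemma~\ref{Lemma2}: a link interruption caused by AoA fluctuations is always counted as an outage.

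Substituting $\gamma_{th}$ into \eqref{eq:snrcdf} then gives the closed-form outage expression in terms of $G_{3,7}^{6,1}$, with the parameter vectors $K_2$ and $K_3$ as defined there. The only point requiring any care is the mixed discrete/continuous nature of $\gamma$, in particular checking that the CDF of Lemma~\ref{Lemma2} is the right-continuous CDF that includes the mass at zero. The paper's own form $a_1 + (1-a_1)(\cdots)$ confirms this, since the bracketed term vanishes as $\gamma\to0^+$. Beyond this, the argument is immediate and involves no computation beyond what is already contained in Lemma~\ref{Lemma2}.
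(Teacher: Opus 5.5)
Your proposal is correct and follows the paper's own proof, which consists of substituting $\gamma_{th}$ into the CDF of Lemma~\ref{Lemma2}. Your extra checks, that $\Pr\{\gamma=\gamma_{th}\}=0$ for $\gamma_{th}>0$ and that the additive $a_1$ term captures the AoA link-interruption atom at $\gamma=0$, are sound and make explicit the points the paper leaves unstated.
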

\begin{proof}
See Appendix \ref{appendix_OP}.
\end{proof}

\subsection{SER Calculation}
\begin{corollary}
\label{Corollary3}
The SER for the ground-to-UAV ORIS-assisted FSO system using MPSK modulation can be derived as
\begin{IEEEeqnarray}{rCl}
SER &=& \tau\left(\frac{Aa_1}{2} + \frac{AM\zeta_{\text{mod}}^2 A(1 - a_1)}{16 \pi} \times \sum_{m=1}^{\beta} b_m 2^{\alpha+m-1} \right. \nonumber\\
&& \left. \times G_{4,7}^{6,2} \Bigg( \frac{(k_1 B_1)^2}{\sin^2(\pi/M) 16 \bar{\gamma}} \;\middle|
\begin{array}{c}
1, 0.5, K_2 \\
K_3, 0
\end{array} \Bigg) \right).
\end{IEEEeqnarray}
\end{corollary}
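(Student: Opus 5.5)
We obtain the SER by averaging the conditional MPSK error probability over the SNR statistics of Lemma~\ref{Lemma1}. The standard route is the CDF-based (integration-by-parts) form. For MPSK at high SNR the conditional SER is well approximated by
\begin{equation*}
P_s(e\mid\gamma)\approx A\,Q\!\left(\sqrt{2\gamma}\sin(\pi/M)\right),
\end{equation*}
with $A$ the modulation-dependent constant (e.g.\ $A=2$ for $M>2$). Writing the average as $\int_0^\infty P_s(e\mid\gamma) f_\gamma(\gamma)\,d\gamma$ and integrating by parts gives
\begin{equation*}
\overline{P}_s=\frac{A\sin(\pi/M)}{2\sqrt{\pi}}\int_0^\infty \gamma^{-1/2} e^{-\gamma\sin^2(\pi/M)} F_\gamma(\gamma)\,d\gamma .
\end{equation*}
This uses $\frac{d}{d\gamma}Q(\sqrt{2\gamma}s)=-\frac{s}{2\sqrt{\pi\gamma}}e^{-\gamma s^2}$. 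The factor $\tau$ then appears because information is decoded only during the ID phase of duration $\tau T_{\text{tot}}$, so the effective SER is weighted by $\tau$.

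Next we substitute the CDF of Lemma~\ref{Lemma2}, which splits the integral into two terms.

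\textbf{Constant term $a_1$.} This is the AoA link-interruption mass. It gives
\begin{equation*}
a_1\,\frac{A\sin(\pi/M)}{2\sqrt{\pi}}\int_0^\infty \gamma^{-1/2}e^{-\gamma\sin^2(\pi/M)}\,d\gamma=a_1\frac{A}{2},
\end{equation*}
since $\int_0^\infty\gamma^{-1/2}e^{-s^2\gamma}\,d\gamma=\sqrt{\pi}/s$. This reproduces the first summand $Aa_1/2$ and reflects that when $I_{\text{AoA}}=0$ the detector guesses, with error $A/2$ in the $Q$-approximation.

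\textbf{Meijer-G term.} The second term requires
\begin{equation*}
\int_0^\infty \gamma^{-1/2}e^{-\gamma\sin^2(\pi/M)}\,G_{3,7}^{6,1}\!\left(\eta\gamma\,\middle|\,\cdot\right)d\gamma,\qquad \eta=\frac{(k_1B_1)^2}{16\bar\gamma}.
\end{equation*}
We evaluate it with the standard Laplace-type integral of a Meijer-G function (Gradshteyn--Ryzhik 7.813.1, or equivalently by writing $e^{-x}=G_{0,1}^{1,0}(x\mid 0)$ and applying the Mellin convolution theorem). This yields
\begin{equation*}
\left(\sin^2(\pi/M)\right)^{-1/2}\,G_{4,7}^{6,2}\!\left(\frac{\eta}{\sin^2(\pi/M)}\,\middle|\,\begin{array}{c}1/2,\,1,\,K_2\\ K_3,\,0\end{array}\right).
\end{equation*}
The new upper parameter $1-\tfrac12=\tfrac12$ is placed among the first $n$ entries, which raises $n$ from 1 to 2 and $p$ from 3 to 4. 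The factor $\sin(\pi/M)$ cancels against the prefactor, leaving $\frac{A}{2\sqrt{\pi}}$. Combining this with $(1-a_1)\frac{\zeta_{\text{mod}}^2A_M}{8\pi}$ and $b_m2^{\alpha+m-1}$ gives a coefficient $\frac{A\zeta_{\text{mod}}^2A_M(1-a_1)}{16\pi^{3/2}}\sum_m b_m2^{\alpha+m-1}$.

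\textbf{Main obstacle.} The main difficulty is bookkeeping of the Meijer-G parameters and constants so that they match the stated form.

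\begin{itemize}
\item The stated upper row lists $1,0.5,K_2$ rather than $0.5,1,K_2$. This is harmless, since parameters within the $a_1,\dots,a_n$ group may be permuted.
\item There is a stray $\sqrt{\pi}$: our derivation gives $16\pi^{3/2}$ where the statement has $16\pi$. We would recheck this factor, possibly by absorbing it via the $2^{\alpha+m-1}$ normalization of the duplication-formula step used in Lemma~\ref{Lemma2}, or by noting an alternative SER form. If no reconciliation is found, we would flag it as a constant discrepancy in the stated expression.
\item The convergence conditions of the Laplace–Meijer integral must also be checked. They hold because $\sin^2(\pi/M)>0$ and $\min K_3>-\tfrac12$.
\end{itemize}
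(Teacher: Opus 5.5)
Your derivation is correct, and it takes a different route from the paper. The paper works with the PDF of Lemma~\ref{Lemma1}. It writes $\frac{A}{2}\mathrm{erfc}(\sin(\pi/M)\sqrt{\gamma})$ as a $G^{2,0}_{1,2}$ function and evaluates $\tau\int_0^\infty p_{(e|\gamma)}(\gamma)f_\gamma(\gamma)\,d\gamma$ as the integral of a product of two Meijer-$G$ functions. Before that, a duplication step turns the $\sqrt{\gamma}$-argument $G^{3,0}_{1,3}$ into a $G^{6,0}_{2,6}$ in $\gamma$. The upper entries $1,0.5$ then arise directly as $1-d_j$ from the erfc parameters $d=(0,\tfrac12)$, which is why they appear in that order. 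You instead integrate by parts onto the CDF of Lemma~\ref{Lemma2}, so the duplication work is inherited and you need only one Laplace-type integral (G\&R 7.813.1). The AoA point mass is handled cleanly through $F_\gamma(0^+)=a_1$, and the constants stay easy to track. Your convergence check and your remark that $a_1,\dots,a_n$ may be permuted are fine.

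On the $\sqrt{\pi}$: there is nothing to reconcile. Your $16\pi^{3/2}$ is the correct constant and the printed $16\pi$ is not. No $\sqrt{\pi}$ is hidden in Lemma~\ref{Lemma2}. Its coefficient $\frac{\zeta_{\text{mod}}^2A_M}{8\pi}2^{\alpha+m-1}$ equals $\frac{\zeta_{\text{mod}}^2A_M}{4}\cdot 2\cdot\frac{2^{\alpha+m-2}}{2\pi}$, i.e. the factor $2$ from $dx/x=2\,du/u$ times the exact duplication factor. The extra $\sqrt{\pi}$ enters the paper's proof through the identity $\mathrm{erfc}(\sqrt{a\gamma})=G^{2,0}_{1,2}(a\gamma\,|\,1;0,0.5)$, which is missing a factor. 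The correct form is $\mathrm{erfc}(\sqrt{x})=\pi^{-1/2}G^{2,0}_{1,2}(x\,|\,1;0,\tfrac12)$; note that this $G$-function equals $\Gamma(\tfrac12)=\sqrt{\pi}$ at $x=0$. With the correct identity, the paper's PDF route reproduces your coefficient $\frac{AA_M\zeta_{\text{mod}}^2(1-a_1)}{16\pi^{3/2}}$.

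You can confirm this with the limit $\bar\gamma\to 0$. Three facts combine: the $z^0$ term of $G^{6,2}_{4,7}(z)$ for large $z$ (coming from the upper parameter $1$), the duplication formula, and the Malaga normalization $\frac{A_M}{2}\sum_m b_m\Gamma(\alpha)\Gamma(m)=1$. Together they make your expression tend to $\tau A/2$. The printed coefficient would instead make the $(1-a_1)$ part tend to $\sqrt{\pi}(1-a_1)A/2$.

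Finally, your proof inherits a typo. Both $G^{6,1}_{3,7}$ in Lemma~\ref{Lemma2} and $G^{6,2}_{4,7}$ here need seven lower parameters, but $[K_3,0]$ lists only six. Duplicating $\alpha$ also produces $\frac{\alpha+1}{2}$, which belongs in $K_3$.
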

\begin{proof}
See Appendix \ref{appendix_SER}.
\end{proof}

\section{Numerical Analysis}
For a fair comparison with the case without ORIS, the total distance is equivalent in both cases, i.e., $d = d_1 + d_2 = Z$.
The channel and system parameters are summarized in Table~\ref{tab:channel_parameters}.

\begin{table}
{\footnotesize\caption{Channel and System Parameters}
\label{tab:channel_parameters}
}
\centering
\scalebox{0.68}{
\begin{tabular}{|c|c|c||c|c|c|}
\hline
\textbf{Parameter} & \textbf{Value} & \textbf{Ref.} & \textbf{Parameter} & \textbf{Value} & \textbf{Ref.} \\
\hline
$T_{tot}$ & 1 & ~\cite{RefWorks:RefID:3-2018simultaneous} & $\rho_1$ & 0.596 & ~\cite{RefWorks:RefID:10-2016performance}\\
\hline
$F$ & 0.75 & ~\cite{RefWorks:RefID:17-1994solar} & $\Omega$ & 1.3265 & ~\cite{RefWorks:RefID:10-2016performance} \\
\hline
$I_0$ & $10^{-9}$ A & ~\cite{RefWorks:RefID:17-1994solar} & $b_0$ & 0.1079 & ~\cite{RefWorks:RefID:10-2016performance} \\
\hline
$V_t$ & 25 mV & ~\cite{RefWorks:RefID:17-1994solar} & $\phi_a - \phi_b$ & $\pi/2$ & ~\cite{RefWorks:RefID:10-2016performance} \\
\hline
$r$ & 0.6 A/W & ~\cite{RefWorks:RefID:17-1994solar} & $\sigma$ & $10^{-14}$ & ~\cite{RefWorks:RefID:18-2009bitinterleaved}  \\
\hline
$S$ & 1.33 W/A & ~\cite{RefWorks:RefID:9-2016optimal} & $\sigma_a$ & 3.45 mrad & ~\cite{RefWorks:RefID:16-2022comprehensive}\\
\hline
$I_L$ & 200 mA & ~\cite{RefWorks:RefID:9-2016optimal} & $d$ & 200 m & ~\cite{RefWorks:RefID:6-2024slipt} \\
\hline
$I_H$ & 1200 mA & ~\cite{RefWorks:RefID:9-2016optimal} & M & 4 & -  \\
\hline
$\lambda$ & 1550 nm & ~\cite{RefWorks:RefID:16-2022comprehensive} & $\rho$ & 0.5 & -\\
\hline
$\zeta$ & 0.3 & ~\cite{RefWorks:RefID:16-2022comprehensive} & $\tau$ & 0.5 & - \\
\hline
$r_a$ & 5 cm & ~\cite{RefWorks:RefID:16-2022comprehensive} & $d_1$ & 100 m & - \\
\hline
$\sigma_{tx}, \sigma_{rx}$ & 40 cm & ~\cite{RefWorks:RefID:16-2022comprehensive}& $d_2$ & 100 m & -  \\ 
\hline
$\sigma_{ty}, \sigma_{ry}$ & 30 cm & ~\cite{RefWorks:RefID:16-2022comprehensive}& $\theta_i, \theta_r $ & $\pi/4$ rad & -  \\ 
\hline
$\zeta_1$ & 10.2 dB/km (Heavy Rain) & \cite{RefWorks:RefID:18-2009bitinterleaved} & $\mu_x$ & 0.03 m & - \\
\hline
$C_n^2$ & $1.7 \times 10^{-13}$ & ~\cite{RefWorks:RefID:19-2017optical} & $\mu_y$ & 0.05 m & - \\ 
\hline
\end{tabular}}
\end{table}

\begin{figure}
\centering
\includegraphics[width=\linewidth]{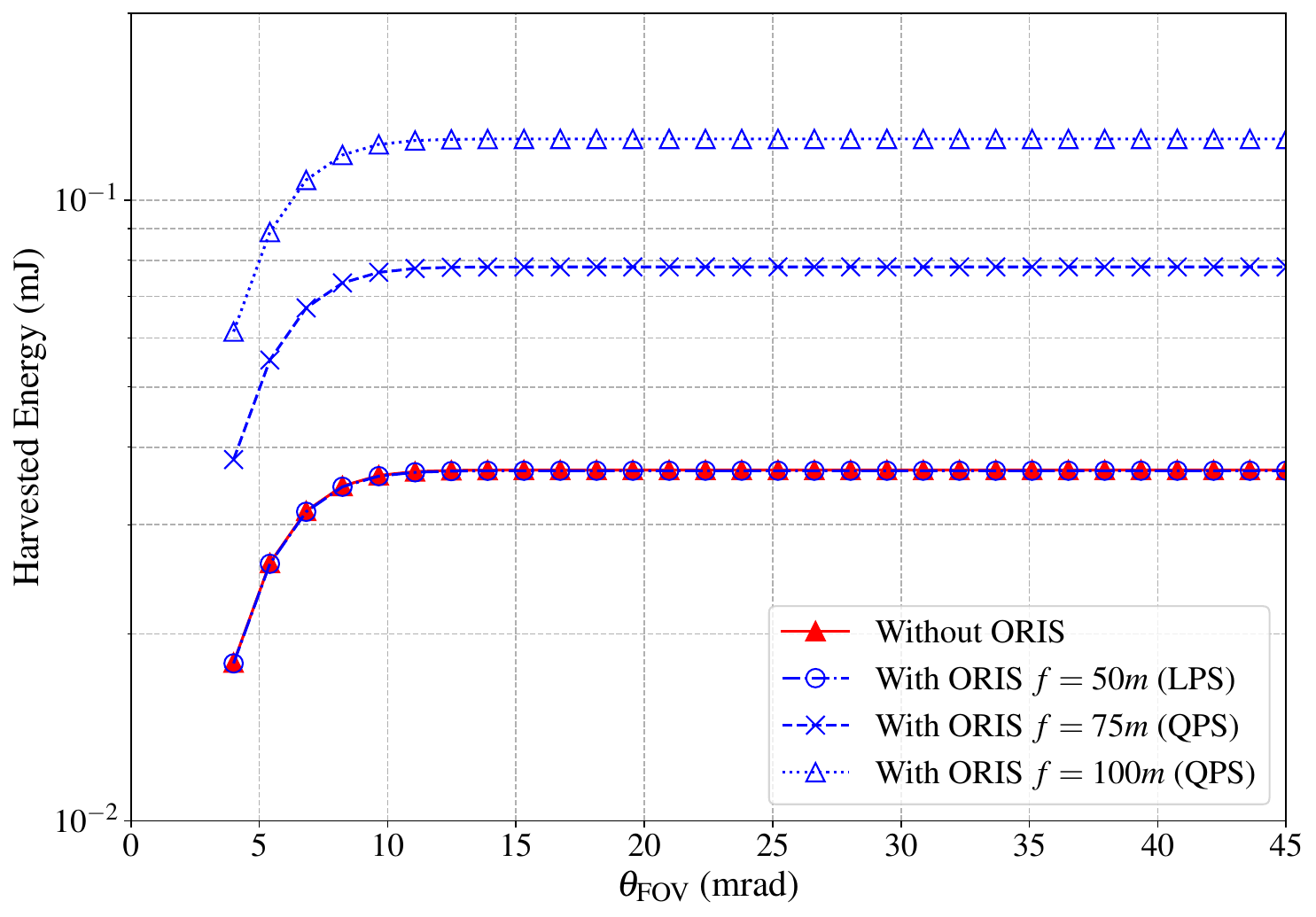}
\caption{Harvested energy versus $\theta_{FOV}$ using ORIS with different focus distances.}
\label{eh_w_ris}
\end{figure}

In Fig.~\ref{eh_w_ris}, the harvested energy is plotted against the Rx FOV when $\sigma_a = 3.45$ mrad.
The corresponding beam widths at the Rx aperture when the focus distance $f=50, 75, 100$ m are $w_{rx,x}^{QPS} = w_{rx,y}^{QPS} = 2.99, 1.99, 1.49$ m, respectively.
The harvested energy increases with the increment in the focus distance $f$ of the QPS profile implemented on the ORIS.
This is because a larger focus distance allows the beam to be more focused at the Rx, which results in a higher received power.
However, it is crucial that the focus distance should not be too large, as it can lead to a decrease in the outage probability, which is further highlighted in Fig.~\ref{outage_vs_snr}.
As seen in Fig.~\ref{outage_vs_snr}, the outage probability increases for larger focus distances.
This is because the narrower beam width increases the system's sensitivity to pointing errors, resulting in a higher outage probability.
\begin{figure}
\centering
\includegraphics[width=\linewidth]{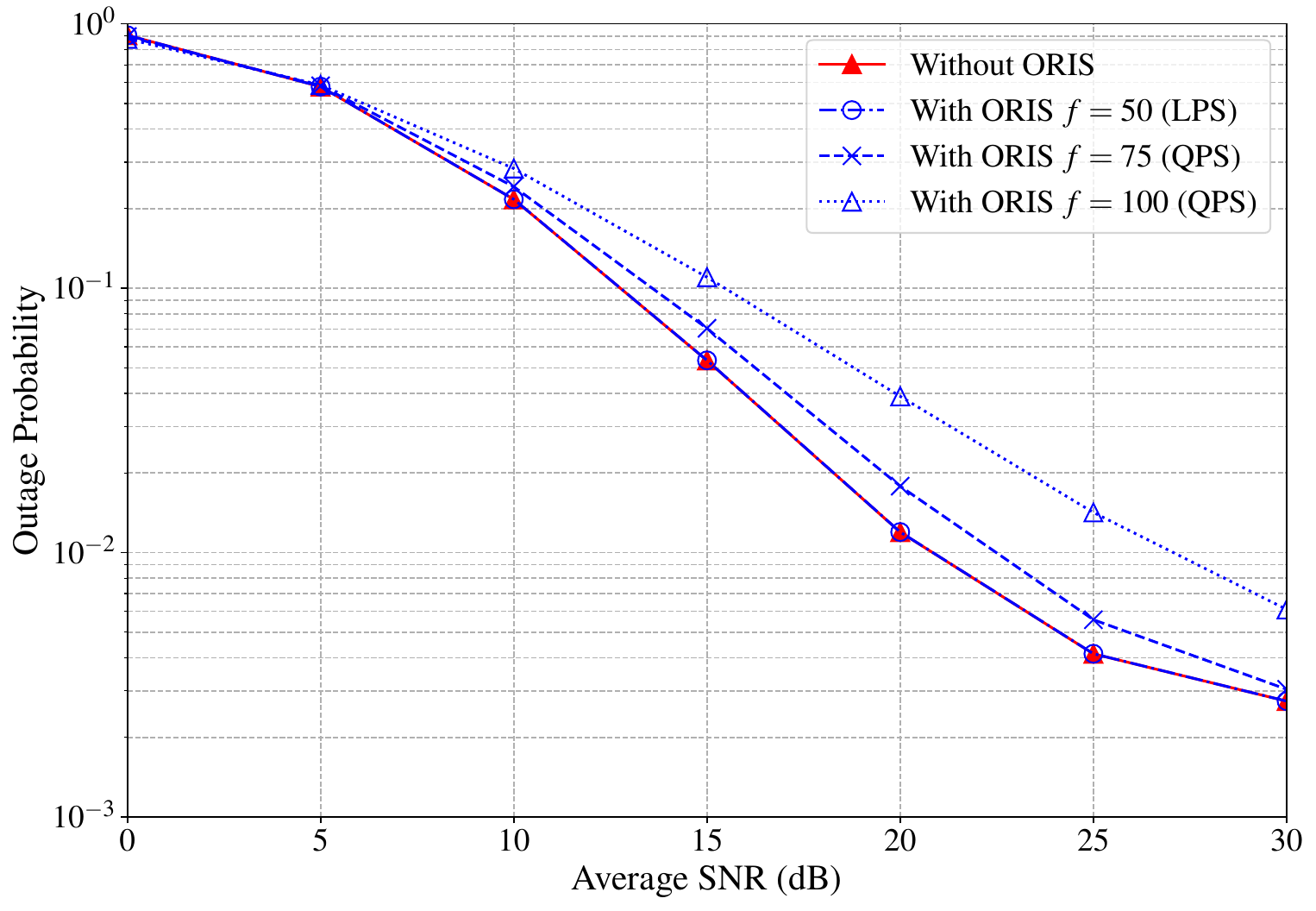}
\caption{Outage probability versus average SNR for different focus distances of the ORIS when $\gamma_{th} = 5$ dB and $\theta_{FOV} = 12$ mrad.}
\label{outage_vs_snr}
\end{figure}

 To better understand the energy-outage trade-off, Fig.~\ref{eh_vs_theta_f} shows the corresponding harvested energy for the same range of FOV angles and focus distances.
 EH is plotted with focus distances that yield acceptable outage performance from general communication reliability perspectives to business applications defined in International Telecommunications Union – Telecommunications Sector (i.e., $10^{-2}$ to $10^{-5}$ outage probability)~\cite{ITU-GSup51},~\cite{ITU-L163}.
 We note that although higher focus distances improve the energy harvesting due to tighter beam focusing, this improvement comes at the cost of increased outage probability, especially when the Rx's FOV is narrow.
 Conversely, lower focus distances maintain more robust communications but offer lower energy harvesting.
 These results suggest that the design of RIS-assisted SLIPT systems must consider a balance between energy efficiency and communication reliability.

 \begin{figure}
 \centering
 \includegraphics[width=.9\linewidth]{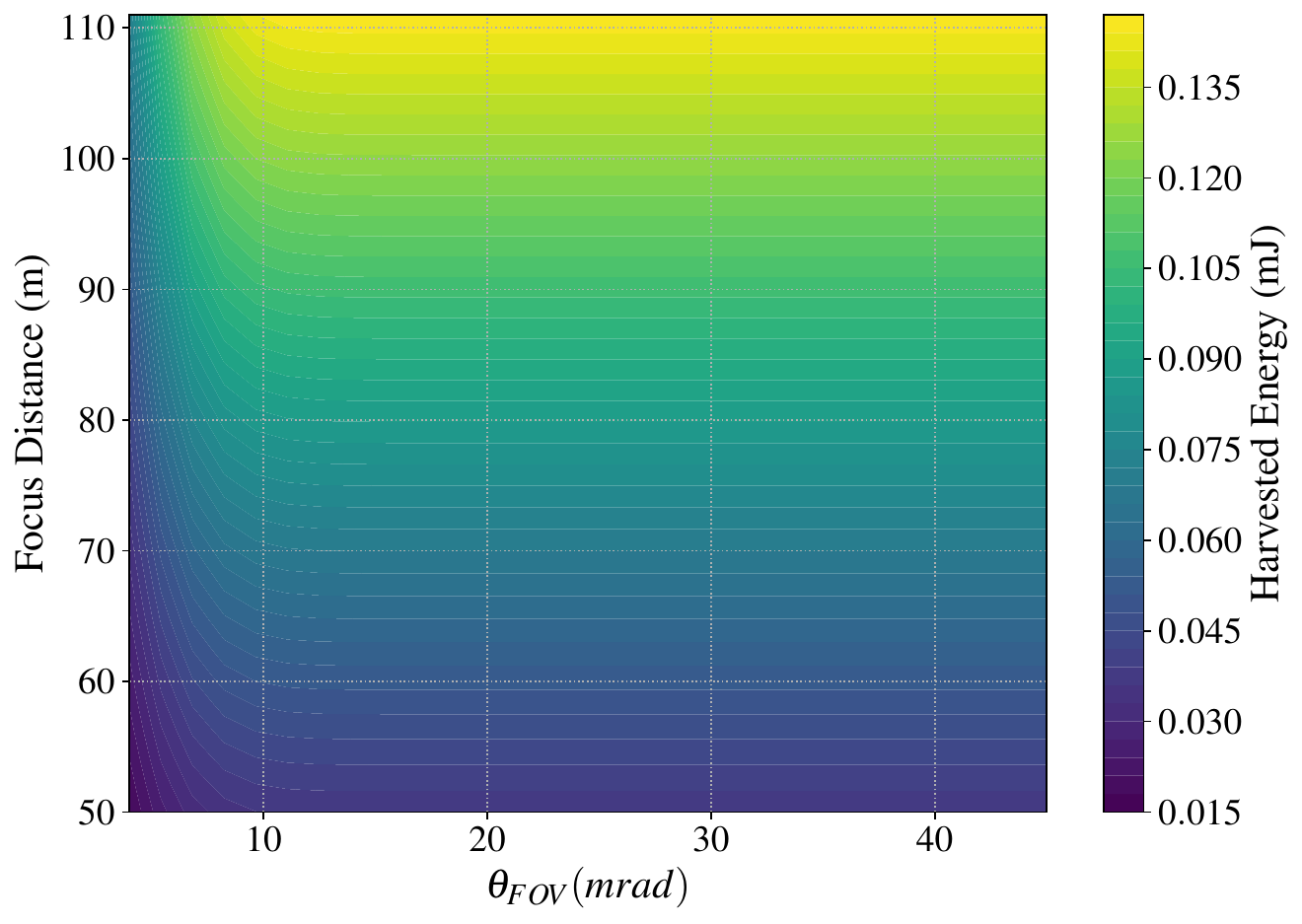}
 \caption{Harvested energy versus $\theta_{FOV}$ for different focus distances.}
 \label{eh_vs_theta_f}
 \end{figure}

Fig.~\ref{fig:ser_vs_snr} illustrates the SER as a function of average SNR for different configurations of focus distance $f$ and $\theta_{\text{FOV}}$.
As expected, the SER decreases with increasing average SNR due to improved signal quality.
However, the rate of decay and the SER floor are significantly influenced by the $f$ and $\theta_{\text{FOV}}$ settings.
Notably, for a fixed FOV, increasing the focus distance consistently degrades the SER performance.
This is due to tighter beam concentration at higher focus distances, which increases susceptibility to pointing errors and AoA misalignments.
While higher focus distances improve harvested energy, they increase the SER floor under tighter FOV conditions.
The results in Fig.~\ref{outage_vs_snr} and \ref{fig:ser_vs_snr} highlight the importance of jointly optimizing the ORIS configuration and UAV detector design.
For mission-critical communications requiring high reliability, conservative focus distances paired with wider FOVs are preferable.
Conversely, for energy-centric tasks with more relaxed reliability demands, aggressive focusing can be adopted to maximize SLIPT efficiency.

\begin{figure}
\centering
\includegraphics[width=\linewidth]{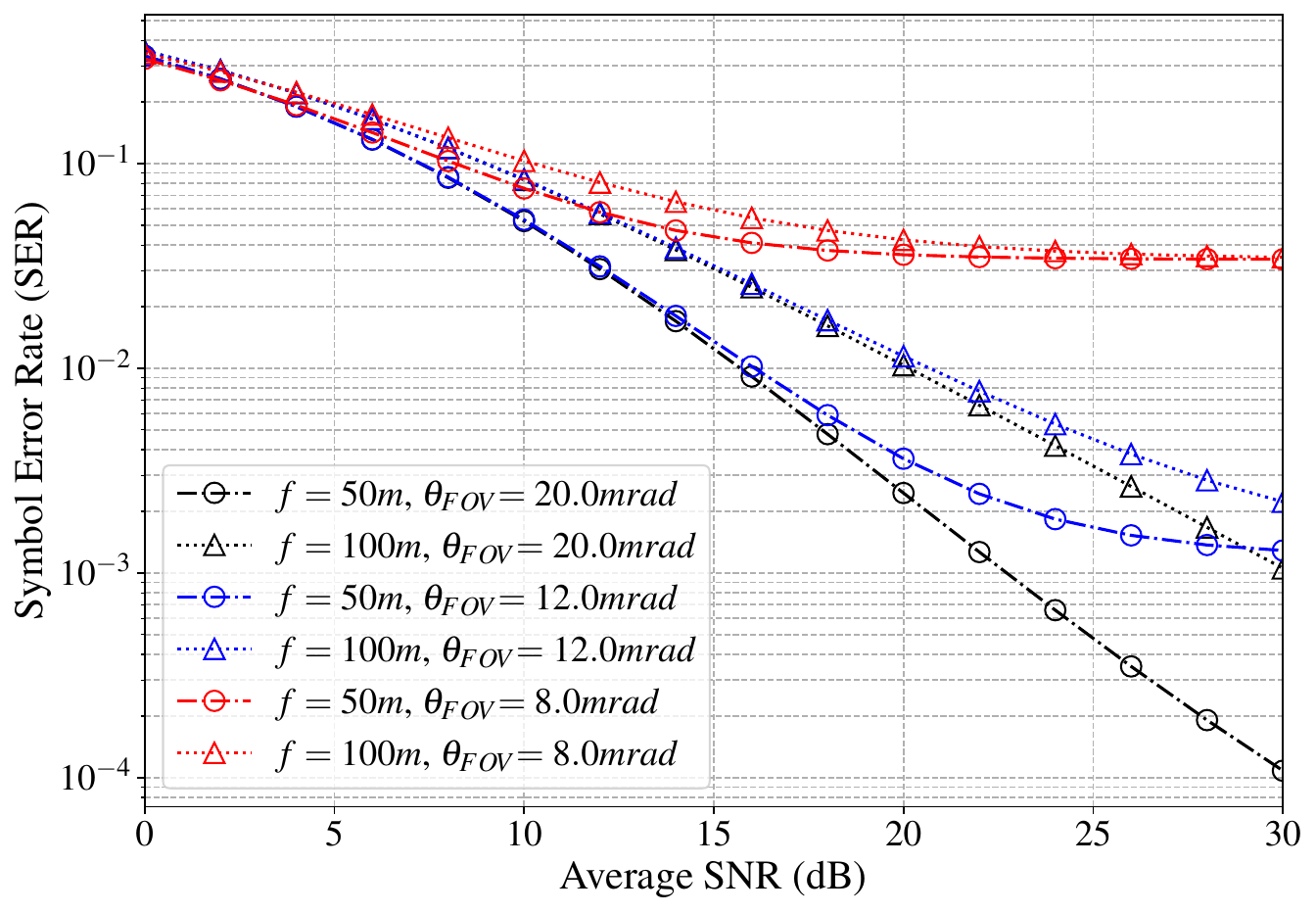}
\caption{SER versus average SNR for various focus distances and UAV FOV angles.}
\label{fig:ser_vs_snr}
\end{figure}

\section{Conclusions}
\label{conclusion}
In this paper, we introduced an ORIS-assisted ground-to-UAV FSO system that enables SLIPT to a UAV.
The use of an ORIS compensates for the lack of direct LoS by dynamically redirecting the laser beam, improving EH while maintaining communication reliability.
We derived a detailed channel model considering the use of ORIS that includes atmospheric attenuation, turbulence, pointing errors, and AoA fluctuations, and provided closed-form expressions for harvested energy, outage probability, and SER under the TSPS method.
Numerical simulations confirmed the benefits of ORIS, particularly in enhancing harvested energy while trading off with slight increases in outage probability and SER under high focus distances.

\appendices
\section{Proof of Lemma~\ref{Lemma1}}
\label{appendix_C}
The PDF of the instantaneous SNR in \eqref{eq:snrpdf} can be derived by substituting the expression for the instantaneous SNR into the combined channel PDF \eqref{eq:combinedpdf}.
This completes the proof.

\section{Proof of Lemma~\ref{Lemma2}}
\label{appendix_D}
The CDF of the SNR $F_\gamma(\gamma)$, can be calculated as $F_\gamma(\gamma) = \int_0^\gamma f_\gamma(x) dx$
where $f_\gamma(x)$ is the PDF of the SNR given in Lemma~\ref{Lemma1}.
This completes the proof.

\section{Proof of Corollary~\ref{corollary1}}
\label{appendix_A}
The maximum power of the solar panel, which we use for EH of the SLIPT at Rx, is given by~\cite{RefWorks:RefID:3-2018simultaneous}
\begin{equation}
P_{MPP} = FI_{DC}V_{OC},
\end{equation}
where $V_{OC} = V_t \ln(1 + \frac{I_{DC}}{I_0})$ is the open circuit voltage, $F$ is the fill factor, $V_t$ is the thermal voltage, $I_0$ is the reverse saturation current, and $I_{DC}$ is the current through the panel. Multiplying the maximum power by the time duration $T_{EH}$ and replacing $I_{DC}$ with $I_{EH}$, the harvested energy can be written as
\begin{equation}
E = T_{EH}FI_{EH}V_t \ln\left(1 + \frac{I_{EH}}{I_0}\right),
\label{eq:harvested_energy}
\end{equation}
where $T_{EH}$ is the energy harvesting time, $I_{EH}$ is the harvesting current.
The TSPS method is a combination of the TS and PS methods, and the energy is harvested in both the EH and ID modes of the TSPS.
Thus, $E_{TSPS}$ can be expressed using EH in TS mode ($E_{TS}$) and PS mode ($E_{PS}$) as
\begin{IEEEeqnarray}{rCL}
E_{TSPS} = \tau E_{PS} + E_{TS},
\label{eq:harvested_energy_tsps}
\end{IEEEeqnarray}
$E_{PS}$ and $E_{TS}$ are the derived below.
In the EH mode, $T_{EH} = (1-\tau)T_{tot}$. By substituting $I'_{\text{DC}}$ into \eqref{eq:harvested_energy} and averaging it over the PDF in \eqref{eq:combinedpdf}, we arrive at
\begin{equation}
E_{TS} = \int_0^\infty\!(1 - \tau)T_{tot}\!FrISBV_t \ln\left(\!1 + \frac{rIhSI_H}{I_0}\right)f_I(I) dI.
\end{equation}
Using $\ln(1 + x) = \text{G}_{2,2}^{1,2}\left( x \middle| \begin{array}{cc} 1, 1 \\ 1, 0 \end{array} \right)$, the closed-form expression becomes
\begin{equation}
E_{TS} = (1 - \tau) E_1 \sum_{m=1}^{\beta} b_m \text{G}_{3,5}^{5,1}\left(\frac{B_1 I_0}{r h S I_H A_{\text{mod}}} \middle| \begin{array}{c} K_4 \\ K_5 \end{array} \right).
\label{eq:harvested_energy_ts}
\end{equation}
For the ID mode, $T_{ID} = \tau T_{tot}$. By substituting $I'_{\text{DC}}$, $\bar{I}_{\text{AC}}$ and \eqref{eq:i_id_tsps} into \eqref{eq:harvested_energy} and averaging over the PDF in \eqref{eq:combinedpdf}
\begin{equation}
E_{PS} = E_1 \sum_{m=1}^{\beta} b_m \text{G}_{3,5}^{5,1}\left(\frac{B_1 I_0}{h A_{\text{mod}} (1 - \rho)(B + \zeta A) r S} \middle| \begin{array}{c} K_4 \\ K_5 \end{array} \right).
\label{eq:harvested_energy_ps}
\end{equation}
Thus, by substituting~\eqref{eq:harvested_energy_ps} and~\eqref{eq:harvested_energy_ts} into~\eqref{eq:harvested_energy_tsps}, we obtain Corollary~\ref{corollary1}.
This completes the proof.

\section{Proof of Corollary~\ref{Corollary2}}
\label{appendix_OP}
The outage probability can be derived from \eqref{eq:snrcdf} in Lemma~\ref{Lemma2} by substituting $\gamma_{th}$ into \eqref{eq:snrcdf}.

\section{Proof of Corollary~\ref{Corollary3}}
\label{appendix_SER}
The conditional SER of the MPSK signal is given as
\begin{equation}
p_{(e/\gamma)}(\gamma) \approx \frac{A}{2} \, \text{erfc} \left( \sin\left( \frac{\pi}{M} \right) \sqrt{\gamma} \right),
\label{eq:ser_conditional}
\end{equation}
where $A=1$ for $M=2$ and $A=2$ for $M>2$.
Since the Rx is in ID mode for $\tau T_{\text{tot}}$ duration, we compute the average SER by taking the integration over the SNR PDF $f_\gamma(\gamma)$ and scaling it by the TSPS time factor $\tau$ as follows:
\begin{equation}
SER_{TSPS} = \tau \int_0^\infty p_{(e|\gamma)}(\gamma) f_\gamma(\gamma) \, d\gamma.
\label{eq:ser_tsps}
\end{equation}
Substituting \eqref{eq:ser_conditional} into \eqref{eq:ser_tsps} and using the PDF of the SNR from Lemma~\ref{Lemma1} and
$\text{erfc}\left( \sqrt{a \gamma} \right) = G^{2,0}_{1,2}\left( a \gamma \left|
\begin{array}{c}
1 \\
0, 0.5
\end{array}
\right.\right), \quad a = \sin^2(\pi/M)
$ gives the final closed-form expression.
This completes the proof.

\section*{Acknowledgement}
This work was supported in part by JST FOREST (Grant JPMJFR2127), in part by JST ASPIRE (Grant JPMJAP2345), in part by JSPS KAKENHI (Grants 23H00470, 24K21615, 24K17272), and in part by the Telecommunications Advancement Foundation.

\bibliographystyle{IEEEtran}
\bibliography{IEEEabrv,ref_macros}

\end{document}